\newcommand{\ketbra}[2]{\ket{#1}\!\!\bra{#2}}
\newcommand{\kl} [2] {\operatorname{D}\left(#1\Big|\Big|#2\right)}
\newcommand{\conv}{\operatorname{conv}}
\newcommand{\hg}{ {{}_2 \! \operatorname{F}_1}}
\newcommand{\Tr}{\operatorname{Tr}}
\newtheorem{thm}{Theorem}
\newtheorem{lmm}{Lemma}
\newtheorem{dfn}{Definition}
\DeclarePairedDelimiter{\ceil}{\lceil}{\rceil}
\begin{document}

\title{Communication capacity of mixed quantum $t$ designs}

\author{Sarah \surname{Brandsen}}

\email{sbrandse@caltech.edu}

\affiliation{Centre    for   Quantum    Technologies,   National
  University of Singapore, 3  Science Drive 2, 117543 Singapore,
  Republic  of Singapore}

\author{Michele \surname{Dall'Arno}}

\email{cqtmda@nus.edu.sg}

\affiliation{Centre    for   Quantum    Technologies,   National
  University of Singapore, 3  Science Drive 2, 117543 Singapore,
  Republic of Singapore}

\author{Anna \surname{Szymusiak}}

\email{anna.szymusiak@uj.edu.pl}

\affiliation{Institute of  Mathematics, Jagiellonian University,
  \L ojasiewicza 6, 30-348 Krak\'ow, Poland}

\date{\today}

\begin{abstract}
  We operationally  introduce mixed  quantum $t$ designs  as the
  most  general arbitrary-rank  extension of  projective quantum
  $t$  designs  which  preserves indistinguishability  from  the
  uniform distribution  for $t$ copies.  First,  we derive upper
  bounds on  the classical  communication capacity of  any mixed
  $t$  design  measurement, for  $t  \in  [1, 5]$.   Second,  we
  explicitly  compute the  classical  communication capacity  of
  several   mixed  $t$   design   measurements,  including   the
  depolarized  version  of:  any  qubit  and  qutrit  symmetric,
  informationally  complete   (SIC)  measurement   and  complete
  mutually   unbiased  bases   (MUB),   the  qubit   icosahedral
  measurement, the  Hoggar SIC measurement, any  anti-SIC (where
  each element is proportional to  the projector on the subspace
  orthogonal to  one of the  elements of the original  SIC), and
  the uniform distribution over pure effects.
\end{abstract}

\maketitle

\section{Introduction}

Arguably, the class of  quantum configurations with the broadest
application in any communication  protocol is that of projective
$t$  designs~\cite{Sco06,  AE07},   including  e.g.   symmetric,
informationally  complete (SIC)  measurements~\cite{Zau99, KK01,
  RBSC04,  App05,  SG10,  Chi15}  and  mutually  unbiased  bases
(MUB)~\cite{KR05,  Ben07, BH07,  BW08,  BWB10}.  Projective  $t$
designs can informally be defined  by the properties of i) being
indistinguishable  from  the  uniform  distribution  up  to  $t$
copies,   and  ii)   comprising  only   pure  elements.    While
indistinguishability from  the uniform  distribution is  an {\em
  operational} restriction of the  protocol, introduced e.g.  to
ensure that  no information leaks  to an adversary, purity  is a
purely {\em mathematical}  idealization bound to be  lost in the
presence of noise.

Previous  generalizations  of  projective  $t$  designs  to  the
arbitrary-rank  case were  either limited  to specific  subsets,
such as SICs  or MUBs~\cite{App07, KG13, KG14,  Ras14, CS14}, or
relaxed  the operational  property of  indistinguishability from
the uniform distribution given  $t$ copies~\cite{GA15}.  In this
work, we  introduce the class of  mixed $t$ designs as  the most
general extension  of projective $t$ designs  that preserves the
operational  property of  indistinguishability from  the uniform
distribution up  to $t$ copies, while  relaxing the mathematical
assumption of pure elements.

In   any   communication   scenario,   such   as   quantum   key
distribution~\cite{BB84}     or    locking~\cite{DHLST04}     of
information in quantum states, a relevant figure of merit is the
classical communication  capacity, namely the maximum  amount of
information  that can  reliably  be extracted  per  use, in  the
asymptotic  limit.   More  precisely, the  celebrated  Shannon's
coding   theorem~\cite{Sha48}    guarantees   that,    for   any
communication channel,  there exist  an encoding and  a decoding
such that the amount of information that can be transmitted with
null error  probability is  equal to  the channel  capacity, and
also that no better performance can be achieved.

For measurements,  the capacity was  proven to be  equivalent to
its   single-shot  analogy,   i.e.   the   informational  power,
introduced in Ref.~\cite{DDS11}. Therein, the simple case of SIC
measurement of a qubit  was solved. Subsequently, considerable
interest arose  for computing the capacity  of several symmetric
measurements~\cite{OCMB11,  Hol12,  Hol13, SS14,  DBO14,  Szy14,
  Dal14,  Dal15,  SS15}.   With very  limited  exceptions  (e.g.
mirror symmetric measurements, introduced in Ref.~\cite{DDS11}),
mixed $t$  designs introduced  here (with $t  \ge 2$,  where our
results  provide novel  insights) encompass  all the  classes of
non-trivial  measurements for  which the  classical capacity  is
known, and extend such classes by adding e.g.  the isotropically
noisy version of projective $t$ designs.

Here, we introduce  mixed $t$ designs as the  most general class
of quantum  configurations that  are indistinguishable  from the
uniform distribution given $t$ copies, and discuss some of their
basic  properties.   First,  we   derive  upper  bounds  on  the
classical  communication  capacity  of   any  mixed  $t$  design
measurement, for $t \in [1,  5]$.  Second, we explicitly compute
the classical communication capacity of several mixed $t$ design
measurements, including  the depolarized  version of:  any qubit
and  qutrit   SIC  measurement  and  complete   MUB,  the  qubit
icosahedral  measurement,   the  Hoggar  SIC   measurement,  any
anti-SIC (where each element is proportional to the projector on
the subspace orthogonal  to one of the elements  of the original
SIC), and the uniform distribution over pure effects.

This work is structured as follows.  In Sec.~\ref{sec:design} we
introduce  mixed $t$  designs  and derive  some  of their  basic
properties.    In   Sec.~\ref{sec:capacity}  we   discuss   some
fundamental facts  about the  communication capacity  of quantum
measurements.   In Sec.~\ref{sec:results}  we  present our  main
results about  the capacity  of mixed $t$  designs measurements.
Finally, we summarize  our results and discuss  some outlooks in
Sec.~\ref{sec:conclusion}.

\section{Classical capacity of mixed $t$ designs}

\subsection{Mixed $t$ designs}
\label{sec:design}

Let us  first recall  some standard  definitions and  facts from
quantum  theory~\cite{NC00}.  Let  $\mathcal{H}$  be a  (finite)
$d$-dimensional Hilbert space, $\mathcal{L}(\mathcal{H})$ be the
space  of  linear  operators   on  $\mathcal{H}$  and  $\mathcal
L_+(\mathcal  H)$ be  the set  of all  the positive-semidefinite
operators  on  $\mathcal{H}$.   The most  general  quantum  {\em
  state}  in $\mathcal{H}$  is  described by  a density  matrix,
namely  a  trace-one operator  in  $\mathcal{L}_+(\mathcal{H})$.
Adopting  the Dirac  notation,  any pure  state  $\phi$ is  also
denoted by $\ketbra{\phi}{\phi}$ (thus $\phi$ corresponds to the
projector on $\ket{\phi}$).  Let $(A,\Sigma)$ be a Borel locally
compact metric  space equipped  with a  Borel measure  $\mu$ (In
particular,  one can  think of  a  finite set  with the  uniform
measure or a  locally compact group with the  Haar measure, e.g.
the group $U(d)$ of unitaries in $\mathcal L(\mathcal H)$.)  The
most general  quantum preparation in $\mathcal{H}$  is described
by  a  quantum  {\em  ensemble}, namely  a  measurable  function
$\rho:A\ni       x\mapsto\rho_x:=p_x\hat\rho_x\in       \mathcal
L_+(\mathcal  H)$, where  $\hat\rho_x$ is  a quantum  state, and
$x\mapsto p_x$ is a probability density (with respect to $\mu$).
The most  general quantum {\em measurement}  in $\mathcal{H}$ is
described  by a  positive  operator-valued  measure (POVM).   In
particular,  a POVM  can  be defined  by  a measurable  function
$\pi:A\ni  y\mapsto\pi_y:=dp_y\hat\pi_y\in\mathcal  L_+(\mathcal
H)$, where $\hat\pi_y$  is a quantum state, $y\mapsto  p_y$ is a
probability density  (with respect to  $\mu$), and $d  \int \dif
\mu(y)  p_y  \hat\pi_y =  \openone$,  the  identity operator  in
$\mathcal{L}(\mathcal{H})$.

We introduce in the  following some quantities that characterize
ensembles and  POVMs.  For  any measurable function  $\chi$ from
$A$ to  positive semidefinite  operators $x\mapsto\chi_x  := \nu
p_x \hat\chi_x \in \mathcal{L}(\mathcal{H})$  such that $\nu$ is
a  normalization, $x\mapsto  p_x$ is  a probability  density and
$\hat\chi_x$ a quantum state, we call
\begin{align*}
  \mu_k(\chi) := \int \dif \mu(x) p_x \Tr[\hat\chi_x^k],
\end{align*}
the $k$-th {\em moment} of $\chi$.  For any pure state $\phi \in
\mathcal{L}(\mathcal{H})$, we call
\begin{align*}
  \gamma_k(\chi, \phi) := \int \dif \mu(x) p_x \left( \bra{\phi}
    \hat\chi_x \ket{\phi} \right)^k,
\end{align*}
the {\em index of coincidence}.

Operationally, a {\em projective quantum $t$ design ensemble} is
an ensemble  $x\mapsto\phi_x :=  p_x \hat\phi_x$ of  pure states
that  cannot be  discriminated  from  the uniformly  distributed
ensemble of  pure states  -- other than  by trivial  guessing --
when given  $t$ copies.  Analogously, a  {\em projective quantum
  $t$  design   POVM}  is   operationally  defined  as   a  POVM
$y\mapsto\pi_y := q_y \hat\pi_y$  such that the ensemble $\phi_y
:=  \Tr_2[\Phi^+ (\openone  \otimes \pi_y)]  = \frac1d  \pi_y^T$
steered  on   a  maximally  entangled  state   $\ket{\Phi^+}  :=
\frac1{\sqrt{d}} \sum_i  \ket{i,i}$ is a projective  quantum $t$
design ensemble.  More generally,  a {\em projective quantum $t$
  design} can  be defined as  a measurable function  $\phi$ from
$A$ to rank-one  positive semidefinite operators $x\mapsto\phi_x
:= \nu p_x \hat\phi_x$ such that
\begin{align*}
  \int  \dif \mu(x)  p_x \hat\phi_x^{\otimes  t} =  \int \dif  g
  \ketbra{\phi_g}{\phi_g}^{\otimes t},
\end{align*}
where $\dif  g$ denotes  the uniform (Haar)  probability measure
over      the     group      $U(d)$     of      unitaries     in
$\mathcal{L}(\mathcal{H})$, see~\cite{Sco06}.

We introduce mixed $t$ designs as a generalization of projective
$t$  designs  that  relax   the  constraint  of  being  rank-one
projectors  while   preserving  indistinguishability   from  the
uniform distribution given $t$ copies.  Operationally, we define
{\em  mixed   $t$  design  ensembles}  $x\mapsto\rho_x   :=  p_x
\hat\rho_x$ as ensembles of  (possibly) mixed states that cannot
be distinguished from the uniform distribution given $t$ copies.
Analogously,  we  operationally  define {\em  mixed  $t$  design
  POVMs}  as POVMs  $y  \mapsto \pi_y$  such  that the  ensemble
$\rho_y  := \Tr_2[\Phi^+  (\openone  \otimes  \pi_y)] =  \frac1d
\pi_y^T$ steered on a maximally entangled state $\ket{\Phi^+} :=
\frac1{\sqrt{d}}  \sum_i  \ket{i,i}$  is   a  mixed  $t$  design
ensemble.   More generally,  we  define {\em  mixed quantum  $t$
  designs} as follows.

\begin{dfn}[Mixed $t$ design]
  \label{def:design}
  We call  a mixed  quantum $t$  design any  measurable function
  $\chi$   from   $A$   to   positive   semidefinite   operators
  $x\mapsto\chi_x :=  \nu p_x \hat\chi_x$, where  $x\mapsto p_x$
  is a probability density (with respect to $\mu$) if
  \begin{align}
    \label{eq:design}
    \int \dif \mu(x)  p_x \hat\chi_x^{\otimes t} :=  \int \dif g
    (U_g \hat\chi U_g^\dagger)^{\otimes t},
  \end{align}
  where $\dif g$  is the Haar probability measure  on $U(d)$ and
  $\hat\chi$ is some unit-trace positive-semidefinite operator.
\end{dfn}

In  the following,  for brevity  we refer  to mixed  $t$ designs
simply as $t$ designs. Let us prove some important properties of
$t$ designs. First, notice that if  $\chi$ is a $t$ design, also
$\chi^T$   is   a  $t$   design,   and   thus  the   operational
interpretation of $t$ design  POVMs immediately follows.  Notice
also   that   by   partial  tracing   Eq.~\eqref{eq:design}   it
immediately follows that any $t$ design is also a $k$ design for
any $1 \le k \le t$, and therefore in particular
\begin{align*}
  \int \dif\mu(x) p_x \hat\chi_x = \frac{\openone}d,
\end{align*}
for any $t$ design $\chi$.

The moments up to $t$ of any $t$ design $\chi$ are given by
\begin{align}
  \label{eq:moment}
  \mu_k(\chi) = \Tr[\hat\chi^k].
\end{align}
for any $1 \le k \le  t$. This uniquely identifies $\hat\chi$ up
to unitaries  if $t  \ge d$.  Equation~\eqref{eq:moment}  can be
readily verified  by first multiplying  Eq.~\eqref{eq:design} by
the shift operator $S_t  := \sum_{i_1,\dots i_t} \bigotimes_{j =
  1}^t \ket{i_{j\oplus1}}\bra{i_j}$  that replaces the  state of
$\mathcal{H}_{j\oplus1}$  with  $\mathcal{H}_j$  (here  $\oplus$
denotes  sum modulus  $t$ and  $\{ \ket{i_j}  \}_{i=1}^d$ is  an
orthonormal basis  of space  $\mathcal{H}_j$), and  then tracing
using  the   property  of  the  shift   operator  that  $\Tr[S_t
V^{\otimes  t}]  =  \Tr[V^t]$  for any  $t$  and  any  Hermitian
operator $V$.

For quantum $t$  designs $\chi$, it follows  from the definition
that for any $1 \le k \le t$ the index of coincidence $\gamma_k$
is independent of $\phi$ and is given by
\begin{align}
  \label{eq:coincidence}
  \gamma_k(\chi)   &   =   {d+k-1  \choose   k}^{-1}   \Tr[P_k
  \hat\chi^{\otimes k}] \nonumber\\  & = \frac{(d-1)!}{(d+k-1)!}
  B_k (x_1, \dots x_k),
\end{align}
where $P_k$ denotes the projector over the symmetric subspace of
$\mathcal{H}^{\otimes k}$, $x_i := (i-1)!  \Tr[\hat\chi^i]$, and
$B_k(x_1,  \dots   x_k)$  is   the  complete   exponential  Bell
polynomial~\cite{Bel27} given by
\begin{align*}
  B_k  (x_1, \dots,  x_k  ) =  \sum_{j=1}^k  B_{k,j} (x_1,  x_2,
  \dots, x_{k-j+1}),
\end{align*}
where
\begin{align*}
  &  B_{k,   j}(x_1,  x_2,  \dots,   x_{k-j+1})  \\  =   &  \sum
  \frac{k!}{i_{1}!    \cdots   i_{k-j+1}!}    \left({x_1   \over
      1!}\right)^{i_1}     \cdots     \left({x_{k-j+1}     \over
      (k-j+1)!}\right)^{i_{k-j+1}},
\end{align*}
and the sum  is over all sequences $i_1,  i_2, \dots, i_{k-j+1}$
such that
\begin{align*}
  \begin{cases}
    i_1 + i_2 + \cdots  + i_{k-j+1}  = j, \\
    i_1 + 2 i_2 + 3 i_3 + \cdots + (k-j+1) i_{k-j+1} = k.
  \end{cases}
\end{align*}
The second equality in Eq.~\eqref{eq:coincidence} follows from a
lengthy  but   straightforward  counting   argument.   Therefore
explicitly for $k \in [1,5]$ one has
\begin{align*}
  \begin{cases}
    \gamma_1 & = \frac1d,\\
    \gamma_2 & = \frac{1+\mu_2}{d(d+1)},\\
    \gamma_3 & = \frac{1+3\mu_2+2\mu_3}{d(d+1)(d+2)},\\
    \gamma_4 & = \frac{1 + 6\mu_2 + 3\mu_2^2 + 8\mu_3 + 6\mu_4}
    {d(d+1)(d+2)(d+3)},\\
    \gamma_5 & = \frac  {1 + 10 \mu_2 + 15 \mu_2^2  + 20 \mu_3 +
      30    \mu_4   +    20    \mu_2   \mu_3    +   24    \mu_5}
    {d(d+1)(d+2)(d+3)(d+4)}.
  \end{cases}
\end{align*}

Relevant   classes    of   $t$   designs    include   symmetric,
informationally  complete (SIC)  POVMs~\cite{Zau99, RBSC04}  and
mutually  unbiased bases  (MUB)~\cite{KR05},  as  well as  their
mixed   generalizations  given   by  symmetric   informationally
complete  measurements  (SIM)~\cite{App07,  KG13}  and  mutually
unbiased   measurements    (MUM)~\cite{KG14}.    Despite   their
importance, the existence of SICs and  MUBs is still a matter of
conjecture~\cite{App05}.   For SICs,  existence has  been proved
analytically for  dimensions $2-15$ and $17$,  $19$, $24$, $35$,
and   $48$~\cite{Zau99,   RBSC04,   KK01,  SG10,   Chi15},   and
numerically for dimensions $2-67$~\cite{RBSC04, SG10}. For MUBs,
existence   is  questioned   for   a  dimension   as  small   as
$6$~\cite{Ben07,BH07,BW08}. On the other  hand, it is known that
MUBs exist  in infinitely many dimensions,  i.e.  all dimensions
being   prime   powers~\cite{BBRV02}.    Other   than   in   the
aforementioned  communications  protocols,  SICs and  MUBs  have
applications  in quantum  tomography~\cite{DDPPS02}, uncertainty
relations~\cite{BHOW13},        and       in        foundational
problems~\cite{FS03,FS09,FS11,AEF11,Fuc12}.

Any  function $\chi$  from $\{0,1,\ldots,  d^2-1\}$ to  positive
semidefinite   operators   $x\mapsto\chi_x:  =   \frac{\nu}{d^2}
\ketbra{\hat\chi_x}{\hat\chi_x}$ such that
\begin{align*}
  | \braket{\hat\chi_x|\hat\chi_{x'}} |^2 = \delta_{x, {x'}} +
  (1-\delta_{x,{x'}}) \frac{1}{d+1},
\end{align*}
for any $x$ and $x'$ defines  a set $\{\chi_x\}_x$ which we call
a symmetric informationally complete  (SIC) set.  Any measurable
function  $\chi$  from  $\{0,1,\ldots,d\}  \times  \{0,1,\ldots,
d-1\}$       to       positive      semidefinite       operators
$(x,y)\mapsto\chi_{x,y}:           =          \frac{\nu}{d(d+1)}
\ketbra{\hat\chi_{x,y}}{\hat\chi_{x,y}}$ such that
\begin{align*}
  |    \braket{\hat\chi_{x,     y}|\hat\chi_{x',y'}}    |^2    =
  \delta_{x,x'}\delta_{y,y'}, + (1 - \delta_{x,x'}) \frac1d,
\end{align*}
for   any   $x$,   $y$,   $x'$,   and   $y'$   defines   a   set
$\{\chi_{x,y}\}_{x,y}$ which  we call  a mutually  unbiased set.
For simplicity,  we use  the notation  $\chi_{x,y} =  \chi_{dx +
  y}$.

An explicit way to construct families of mixed $t$ designs is by
the affine  combination of any mixed  $t$ design $x\mapsto\chi_x
:= p_x \hat\chi_x$ with the maximally mixed operator as follows:
\begin{align*}
  \mathcal{D}_\lambda(\chi_x)  =  \lambda \chi_x  +  (1-\lambda)
  \Tr[\chi_x] \frac{\openone}d,
\end{align*}
for any  $\lambda$ such that $\mathcal{D}_\lambda(\chi_x)  \ge 0
\;      \forall     x$,      as     follows      by     applying
$\mathcal{D}_\lambda^{\otimes    t}$    to   both    sides    of
Eq.~\eqref{eq:design}    and     noticing    that     the    map
$\mathcal{D}_\lambda$  commutes with  any unitary  channel. More
precisely,    let    $\hat\chi_x    :=    \sum_k    \alpha_{x,k}
\ketbra{\phi_{x,k}}{\phi_{x,k}}$ be a  spectral decomposition of
$\hat\chi_x$  and   let  $\alpha_{\textrm{max}}   :=  \max_{x,k}
\alpha_{x,k}$    and   $\alpha_{\textrm{min}}    :=   \min_{x,k}
\alpha_{x,k}$.   Then  by  direct  inspection  it  follows  that
$\mathcal{D}_\lambda(\chi_x) \ge 0$ if and only if
\begin{align}
  \label{eq:affine}
  \lambda  \in \left[  \frac1{1 -  d \alpha_{\textrm{max}}}  \le
    \frac1{1-d}, \frac1{1 - d \alpha_{\min}} \ge 1 \right].
\end{align}
The   linear   map   $\mathcal{D}_\lambda$  corresponds   to   a
depolarizing channel if  and only if $\lambda \in  [0, 1]$. Then
it follows that  the depolarized version of any $t$  design is a
$t$ design.  Furthermore,  for any $t$ design  $\chi$, we define
the corresponding  {\em anti  $t$ design}  as $\mathcal{D}_{(1-d
  \alpha_{\textrm{max}})^{-1}}  (\chi)$.   Notice  that  due  to
Eq.~\eqref{eq:affine}  any  anti $t$  design  is  a $t$  design.
Finally  one  has  the  following simple  relation  between  the
moments of  any $t$ design  $\chi$ and those of  its depolarized
version $\mathcal{D}_\lambda(\chi)$:
\begin{align*}
  \mu_k(\mathcal{D}_\lambda(\chi))  = \sum_{n=0}^k  {k \choose
    n}       \lambda^n      \left(\frac{1-\lambda}d\right)^{k-n}
  \mu_n(\chi).
\end{align*}

\subsection{Communication capacity}
\label{sec:capacity}

Let us  first recall  some standard  definitions and  facts from
information   theory~\cite{CT06}.   Intuitively,   a  means   of
quantifying  the   distinctiveness  of  two   given  probability
densities $p$ and $q$ on $A$ (with respect to $\mu$) is given by
the  {\em  relative  entropy}  $\kl{p}{q}$, also  known  as  the
Kullback-Leibler divergence, defined as
\begin{align*}
  \kl{p}{q} := \int\dif \mu(x) p_x \ln\frac{p_x}{q_x} .
\end{align*}
A  measure  of the  correlation  between  any two  given  random
variables  $X$  and  $Y$  distributed  according  to  the  joint
probability  density $p_{(X,Y)}$,  is given  by the  {\em mutual
  information} $I(X;Y)$ defined as
\begin{align*}
  I(X;Y) := \kl{p_{(X,Y)}}{p_X p_Y},
\end{align*}
where   $p_X$   and   $p_Y$   are   the   marginal   probability
distributions.  For any ensemble $\rho$ and POVM $\pi$ we denote
with  $I(\rho, \pi)$  the  mutual  information $I(X;Y)$  between
random   variables  $X$   and  $Y$   distributed  according   to
$(x,y)\mapsto p_{x,y} := \Tr[\rho_x \pi_y]$.

Following Ref.~\cite{Hol12}, we define  the capacity of any POVM
as follows.

\begin{dfn}[Unassisted classical capacity of POVMs]
  \label{def:povm-capacity}
  The unassisted classical  capacity of any POVM  $\pi$ is given
  by
  \begin{align*}
    C(\pi)   :=    \lim_{t\to\infty}   \frac1t   \max_{\rho}
    I(\rho, \pi^{\otimes t}),
  \end{align*}
  where    $\pi^{\otimes    t}$     stands    for    the    POVM
  $y\mapsto\pi_y^{\otimes t}$ and the  maximum is over ensembles
  $\rho$.
\end{dfn}

The           operational            interpretation           of
Definition~\ref{def:povm-capacity}  is   provided  by  Shannon's
noisy-channel coding theorem~\cite{Sha48}, which proves that the
capacity represents  the maximum amount of  information that can
be reliably conveyed  through POVM $\pi$ per use  of the device,
in the asymptotic limit.

Its  explicit computation  is in  general very  challenging.  To
proceed, let us first notice  that the problem can be simplified
as follows
\begin{align}
  \label{eq:infopower}
  C(\pi) = W(\pi) := \max_{\rho} I(\rho, \pi),
\end{align}
where  $W(\pi)$ is  the  {\em informational  power}~\cite{DDS11,
  OCMB11, Hol12, Hol13, SS14,  DBO14, Szy14, Dal14, Dal15, SS15}
of  POVM   $\pi$,  and  its   additivity  has  been   proved  in
Ref.~\cite{DDS11}.

A further simplification in  the calculation of $W(\pi)$ follows
from the fact  that, without loss of generality,  the maximum in
Eq.~\eqref{eq:infopower} can  be taken  over pure  ensembles, as
proved in Ref.~\cite{DDS11}.

We now introduce the following important preliminary result.

\begin{lmm}
  \label{lmm:capacity}
  The capacity $C(\pi)$ of any  POVM $y \mapsto d q_y \hat\pi_y$
  is upper bounded by
  \begin{align}
    \label{eq:capacity}
    C(\pi) \le & \max_\phi \kl{q^{\phi}}{ q}  \nonumber\\
    = &  \ln d  - d  \min_\phi \int  \dif \mu(y)  q_y \eta\left(
      \bra{\phi} \hat\pi_y \ket{\phi} \right),
  \end{align}
  where $q^{\phi}$  denotes the  probability density  $y \mapsto
  dq_y\bra{\phi}\hat\pi_y\ket{\phi}$ and $\eta(x) := - x \ln x$.
  The inequality is tight if $\openone/d \in \conv(\{ \hat\phi_x
  \})$  where $\{  \hat\phi_x \}$  consists of  the states  that
  optimize  Eq.~\eqref{eq:capacity}  and   $\conv$  denotes  the
  convex hull.
\end{lmm}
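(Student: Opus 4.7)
The plan is to combine the already-stated reduction $C(\pi) = W(\pi) = \max_\rho I(\rho,\pi)$ from Eq.~\eqref{eq:infopower} with the fact, cited from Ref.~\cite{DDS11}, that the optimum can be attained on pure state ensembles. Thus I would restrict to $\rho_x = p_x \hat\phi_x$ with $\hat\phi_x = \ketbra{\phi_x}{\phi_x}$, for which the induced joint density is $p_{x,y} = p_x \cdot d q_y \bra{\phi_x}\hat\pi_y\ket{\phi_x} = p_x\, q^{\phi_x}(y)$, so that $q^{\phi_x}$ is exactly the conditional distribution of the outcome given the input label $x$.

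The key step is a standard reference-distribution estimate. First I would rewrite the mutual information in its conditional-relative-entropy form
\[
I(\rho,\pi) = \int \dif\mu(x)\, p_x\, \kl{q^{\phi_x}}{\bar q},
\]
where $\bar q_y := \int \dif\mu(x)\, p_x\, q^{\phi_x}(y)$ is the output marginal, and then invoke the elementary identity
\[
\int \dif\mu(x)\, p_x\, \kl{q^{\phi_x}}{r} = I(\rho,\pi) + \kl{\bar q}{r}
\]
valid for any probability density $r$. Choosing $r = q$ (which is a probability density by POVM normalization: $\int \dif\mu(y)\, q_y = \frac{1}{d}\int \dif\mu(y)\, \Tr[\pi_y] = 1$) and using non-negativity of the relative entropy gives
\[
I(\rho,\pi) \le \int \dif\mu(x)\, p_x\, \kl{q^{\phi_x}}{q} \le \max_\phi \kl{q^\phi}{q},
\]
which is the first inequality of the lemma.

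The explicit form claimed in Eq.~\eqref{eq:capacity} follows from a direct expansion: splitting $\ln(d q_y \bra{\phi}\hat\pi_y\ket{\phi}/q_y) = \ln d + \ln \bra{\phi}\hat\pi_y\ket{\phi}$ and using $\int \dif\mu(y)\, d q_y \bra{\phi}\hat\pi_y\ket{\phi} = \bra{\phi}\openone\ket{\phi} = 1$ collapses the first summand to $\ln d$ and turns the second into $-d \int \dif\mu(y)\, q_y\, \eta(\bra{\phi}\hat\pi_y\ket{\phi})$. For tightness, I would let $\{\hat\phi_x\}$ be maximizers of $\kl{q^\phi}{q}$ and assume $\openone/d = \int \dif\mu(x)\, p_x\, \hat\phi_x$ for some probability density $p$; then $\bar q_y = d q_y\, \Tr[\hat\pi_y\,\openone/d] = q_y$, so $\kl{\bar q}{q} = 0$ and each $\kl{q^{\phi_x}}{q}$ already attains the maximum, hence both inequalities in the chain saturate and the ensemble $x\mapsto p_x \hat\phi_x$ witnesses equality.

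I do not expect a substantial obstacle: the argument is essentially the classical Holevo-style reference-distribution bound adapted to POVMs and re-expressed through $\eta$. The only subtle points are verifying that $q$ is the correct reference that cleanly extracts the $\ln d$ constant, and ensuring the measurability hypotheses allow the swap between integration over $x$ and integration over $y$ in the identity above.
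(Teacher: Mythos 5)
Your proposal is correct and follows essentially the same route as the paper: the compensation identity $\int \dif\mu(x)\, p_x \kl{q^{\phi_x}}{q} = I(\rho,\pi) + \kl{\bar q}{q}$ is exactly the paper's decomposition of $I(\phi_x,\pi_y)$ into $\kl{d p_x q_y \bra{\hat\phi_x}\hat\pi_y\ket{\hat\phi_x}}{p_x q_y} - \kl{d q_y \Tr[\sigma\hat\pi_y]}{q_y}$, followed by the same two steps (dropping the non-negative term and bounding the average by the maximum) and the same saturation condition $\sigma = \openone/d$.
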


\begin{proof}
  For   any    pure   ensemble    $x\mapsto   \phi_x    :=   p_x
  \ketbra{\hat\phi_x}{\hat\phi_x}$,  let  $\sigma :=  \int  \dif
  \mu(x) p_x \hat\phi_x$ be the average state. Then one has
  \begin{align*}
    & I(\phi_x,  \pi_y) \\  = &  \kl{d p_x  q_y \bra{\hat\phi_x}
      \hat\pi_y   \ket{\hat\phi_x}}{   d  p_x   q_y   \Tr[\sigma
      \hat\pi_y] }  \\ = & d  \iint \dif \mu(x) \dif  \mu(y) p_x
    q_y \bra{\hat\phi_x}  \hat\pi_y \ket{\hat\phi_x}  \ln \frac{
      \bra{\hat\phi_x} \hat\pi_y  \ket{\hat\phi_x} } {\Tr[\sigma
      \hat\pi_y]}.
  \end{align*}
  Since one has  
  \begin{align*}
    \ln  \frac{  \bra{\hat\phi_x} \hat\pi_y  \ket{\hat\phi_x}  }
    {\Tr[\sigma  \hat\pi_y]}  =  \ln \left(  d  \bra{\hat\phi_x}
      \hat\pi_y  \ket{\hat\phi_x} \right)  -  \ln  \frac {d  q_y
      \Tr[\sigma \hat\pi_y]} {q_y},
  \end{align*}
  it follows that
  \begin{align*}
    &  I(\phi_x,  \pi_y)  \\
    =    &   \kl{d    p_x    q_y   \bra{\hat\phi_x}    \hat\pi_y
      \ket{\hat\phi_x}}  {p_x  q_y}  -  \kl  {d  q_y  \Tr[\sigma
      \hat\pi_y]} {q_y}.
  \end{align*}
  From the non-negativity of the relative entropy one has
  \begin{align*}
    &  I(\phi_x,  \pi_y)  \\
    \le   &    \kl{d   p_x   q_y    \bra{\hat\phi_x}   \hat\pi_y
      \ket{\hat\phi_x}}
    {p_x q_y} \\
    = & d \iint \dif \mu(x) \dif \mu(y) p_x q_y \bra{\hat\phi_x}
    \hat\pi_y \ket{\hat\phi_x} \ln \left( d \bra{\hat\phi_x}
      \hat\pi_y \ket{\hat\phi_x} \right) \\
    \le   &   \max_x   \kl{d  q_y   \bra{\hat\phi_x}   \hat\pi_y
      \ket{\hat\phi_x}}{q_y},
  \end{align*}
  where  the last  inequality  follows from  upper bounding  the
  average   with  the   largest  element.    Notice  that   both
  inequalities  are   saturated  iff  $\Tr[\sigma\hat\pi_y]=1/d$
  which is fulfilled whenever there exists a probability density
  $x\mapsto p_x$  such that  $\sigma =  \openone/d$ and  all the
  states $\hat\phi_x$ optimize Eq.~\eqref{eq:capacity}.
\end{proof}

Notice   that,   as   a    trivial   consequence   of   Holevo's
theorem~\cite{Hol73}, the  quantity $\ln d -  C(\pi)$, for which
Lemma~\ref{lmm:capacity}   provides  a   lower  bound,   can  be
interpreted as a measure of  how suboptimal measurement $\pi$ is
for communication tasks.

The  maximization  of  the  communication capacity  can  now  be
significantly simplified in  some cases.  Indeed, let  us set $a
:=  \min_{y,\psi}   \bra{\psi}  \hat\pi_y  \ket{\psi}$,   $b  :=
\max_{y,\psi}  \bra{\psi} \hat\pi_y  \ket{\psi}$, and  let $r  :
[a,b]\to\mathbb R$  be the  Hermite interpolating  polynomial of
$\eta$ such that the assumptions  of Lemma \ref{lmm:anna} in the
Appendix  are fulfilled  (and  so $r$  interpolates $\eta$  from
below). Then
\begin{align}
  \label{eq:hermite1}
  \int \dif \mu(y) p_y \eta(\bra{\phi} \hat\pi_y \ket{\phi})\geq
  \int  \dif\mu(y)  p_y  r(\bra{\phi} \hat\pi_y  \ket{\phi})  =:
  Q(\phi),
\end{align}
with equality if and only if  $\phi$ is such that all the points
of  interpolation $x_i$  are of  the form  $\bra{\phi} \hat\pi_y
\ket{\phi}$   (then  necessarily   the  set   of  all   overlaps
$\{\bra{\phi}  \hat\pi_y  \ket{\phi}\}_{y\in   A}$  has  to  be
finite). In consequence, we get
\begin{align}
  \label{eq:hermite2}
  C(\pi) \le \ln d - \min_\phi Q(\phi).
\end{align}
Let us now assume that we  have some predictions about the state
that    minimizes    the    lhs    of    the    inequality    in
Eq.~\eqref{eq:hermite1}  and   let  $\phi_0$  be   the  supposed
minimizer. Then, if the points of interpolation are chosen to be
$\{\bra{\phi_0} \hat\pi_y \ket{\phi_0}\}_{y\in  A}$, in order to
show that $\phi_0$  is indeed a minimizer, it is  enough to show
that  $\phi_0$  minimizes  $Q$.   In  particular,  this  becomes
trivial whenever $Q(\phi)$ is  constant.  In consequence, we get
the   equality   in   Eq.~\eqref{eq:hermite2}  exactly   as   in
Lemma~\ref{lmm:capacity}.  The remainder of this work is devoted
to the  quantification of  $C(\pi)$ through the  optimization of
$Q(\phi)$, for the specific case of $t$ designs.

\subsection{Main results}
\label{sec:results}

Our first main result is an upper bound on the capacity $C(\pi)$
of any $t$ design POVM $\pi$, as a function of the dimension and
of   the   indices    of   coincidence   $\gamma_k(\chi)$   (or,
equivalently, of the moments $\mu_k(\pi)$), for $t \in [1, 5]$.

\begin{thm}
  \label{thm:bounds}
  The classical capacity  $C(\pi)$ of any $t$  design POVM $\pi$
  is upper bounded by $C(\pi) \le C_t$, where
  \begin{align*}
    \begin{cases}
      C_1 = & \ln d \\
      C_2 = & \ln{d} + \ln \frac{\gamma_2}{\gamma_1}, \\
      C_3  =  &  \ln  d   +  \frac{d  (\gamma_1  -  \gamma_2)^2}
      {\gamma_1-2 \gamma_2+ \gamma_3} \ln
      \frac{\gamma_2-\gamma_3}{\gamma_1-\gamma_2}, \\
      C_4  =   &  \ln  d   +  \frac12  \ln   \frac{\gamma_3^2  -
        \gamma_2\gamma_4}{\gamma_2^2-   \gamma_1   \gamma_3} \\ & +
      \frac{d  (\gamma_1^2   \gamma_4  -  3   \gamma_1  \gamma_2
        \gamma_3  +  2   \gamma_2^3)}{2  \sqrt{\Delta_4}}   \ln
      \frac{\gamma_2\gamma_3     -    \gamma_1     \gamma_4    +
        \sqrt{\Delta_4}}{\gamma_2\gamma_3 - \gamma_1 \gamma_4 -
        \sqrt{\Delta_4}},
    \end{cases}
  \end{align*}
  with  $\Delta_4   =  -3\gamma_2^2  \gamma_3^2  +   4  \gamma_1
  \gamma_3^3+  4  \gamma_2^3  \gamma_4  -  6  \gamma_1  \gamma_2
  \gamma_3 \gamma_5 + \gamma_1^2 \gamma_4^2$.
\end{thm}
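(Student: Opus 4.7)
The plan is to combine Lemma~\ref{lmm:capacity} with the Hermite interpolation device set up in Eqs.~\eqref{eq:hermite1}--\eqref{eq:hermite2}. For any polynomial $r(x)=\sum_{k=0}^{t}c_k x^k$ of degree at most $t$ satisfying $r(x)\le\eta(x)$ on the range $[a,b]\subseteq[0,1]$ of $\bra{\phi}\hat\pi_y\ket{\phi}$, Lemma~\ref{lmm:capacity} yields $C(\pi)\le\ln d-d\int\dif\mu(y)\,q_y\,r(\bra{\phi}\hat\pi_y\ket{\phi})$. The decisive observation is that, since $\pi$ is a $t$ design, Eq.~\eqref{eq:coincidence} ensures that
\begin{align*}
\int\dif\mu(y)\,q_y\,(\bra{\phi}\hat\pi_y\ket{\phi})^k=\gamma_k
\end{align*}
is independent of $\phi$ for every $1\le k\le t$, so the bound becomes
\begin{align*}
C(\pi)\le\ln d-d\sum_{k=0}^{t}c_k\gamma_k,
\end{align*}
with right-hand side entirely $\phi$-independent and linear in the coefficients $c_k$.

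I would then maximise this bound by solving the semi-infinite linear program of finding polynomials $r$ of degree at most $t$ with $r\le\eta$ on $[0,1]$ that maximise $\sum_k c_k\gamma_k$. By Krein--Markov duality for truncated moment problems, the extremiser $r^*$ is a Hermite interpolant of $\eta$ at a finite set of nodes whose multiplicities sum to $t+1$. Admissibility $\eta-r^*\ge 0$ on $[0,1]$ follows from the Hermite remainder formula $\eta(x)-r^*(x)=\eta^{(t+1)}(\xi)/(t+1)!\cdot\prod_i(x-x_i)^{m_i}$; since $\eta^{(k)}(x)=(-1)^{k-1}(k-2)!/x^{k-1}$ has a definite sign on $(0,1)$, this forces every interior node to carry even multiplicity (tangential contact), while odd multiplicity is allowed only at the endpoints $\{0,1\}$ where $\eta$ vanishes.

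For each $t$ the admissible node pattern is then essentially unique, and I would parametrise $r^*$ by the free interior-node locations and carry out the resulting finite-dimensional optimisation. For $t=1$, only $r^*\equiv 0$ is admissible, giving $C_1=\ln d$. For $t=2$, $r^*$ has simple contact at $x=0$ and one tangent at $x^*$; the one-dimensional optimisation yields $x^*=\gamma_2/\gamma_1$ and hence $C_2=\ln d+\ln(\gamma_2/\gamma_1)$. For $t=3$, $r^*$ has simple contacts at $x=0,1$ and one interior tangent at $s$, and a key cancellation (a common factor $(1-s)(1-3s)-2s^2\ln s$ appearing simultaneously in both ingredients of the stationarity condition) collapses it to the purely algebraic equation $(1+s)(\gamma_1-\gamma_2)=\gamma_1-\gamma_3$, giving $s^*=(\gamma_2-\gamma_3)/(\gamma_1-\gamma_2)$ and hence $C_3$. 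For $t=4$, $r^*$ has simple contact at $x=0$ and two interior tangents $s_1,s_2$, and the two coupled stationarity conditions reduce (after the analogous cancellation) to a quadratic in the node locations whose discriminant is $\Delta_4$, producing the stated formula for $C_4$.

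The main obstacle will be the algebraic manipulation for $t=3,4$: writing the Hermite interpolant in closed form in terms of the free nodes, identifying the cancellation that eliminates the transcendental $\ln s$ contributions from the stationarity equations, and then patiently rearranging the resulting optimum into the compact form of the theorem (the $\pm\sqrt{\Delta_4}$ in $C_4$ is the signature of the pair $(s_1,s_2)$ being the two roots of the same quadratic). A concurrent, more routine verification, handled by Lemma~\ref{lmm:anna} in the Appendix via the sign analysis above, is that the chosen $r^*$ indeed satisfies $r^*\le\eta$ globally on $[a,b]$, so that the Hermite lower bound is genuinely applicable to the given POVM.
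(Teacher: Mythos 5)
Your proposal follows essentially the same route as the paper's proof: Lemma~\ref{lmm:capacity} plus a degree-$t$ Hermite interpolant of $\eta$ from below (with a simple node at $0$, double interior nodes, and a simple node at $1$ when $t$ is odd, justified by the sign of $\eta^{(t+1)}$ exactly as in Lemma~\ref{lmm:anna}), the $\phi$-independence of $\gamma_k$ for $k\le t$, and a final optimisation over the free node locations yielding the stated $x_k^*$. The only cosmetic difference is that you invoke Krein--Markov duality to argue this node pattern is optimal among all admissible polynomials, whereas the paper simply exhibits the family and optimises within it; the computations and conclusions coincide.
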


\begin{proof}
  Let  $0=x_0<x_1<\ldots<x_{\ceil{\frac{t}2}}\leq   1$  and  let
  $x_{\ceil{\frac{t}2}}  =  1$ iff  $t$  is  odd.  According  to
  Lemma~\ref{lmm:anna}, the polynomial $r(x)  := \sum_i a_i x^i$
  of degree at  most $t$ such that $r(x_k) =  \eta(x_k)$ for any
  $k \in  [0, \ceil{\frac{t}2}]$ and $r'(x_k)  = \eta'(x_k)$ for
  any $k  \in [1,  \ceil{\frac{t-1}2}]$ is  such that  $r(x) \le
  \eta(x)$    for   any    $x    \in    [0,   1]$.     Therefore
  Eq.~\eqref{eq:capacity} becomes
  \begin{align*}
    C(\pi) \le C_t := \ln d - d \sum_{i = 1}^t a_i \gamma_i,
  \end{align*}
  where  we  used  the  fact   that  the  index  of  coincidence
  $\gamma_i$ is  independent of the  choice of $\phi$  for $i\le
  t$.
  
  In order to upper bound $C(\pi)$,  we express $\{ a_i \}$ as a
  function of $\{  x_k \}$ and minimize $C_t$ over  $\{ x_k \}$.
  $C_1$  is an  immediate  result  as it  does  not involve  any
  optimization.  Upon  denoting with  $\{ x_k^* \}$  the optimal
  solution one has
  \begin{align*}
    \begin{cases}
      x_1^*=   \frac{\gamma_2}{\gamma_1}, & \textrm{for } t = 2,\\
      x_1^*= \frac{\gamma_2 - \gamma_3}{\gamma_{1} - \gamma_2},
      & \textrm{for } t = 3, \\
      x_{1,2}^* = \frac{\gamma_1 \gamma_4 - \gamma_2\gamma_3 \pm
        \sqrt{\Delta_{4}}} {2(\gamma_1  \gamma_3 - \gamma_2^2)},
      & \textrm{for } t = 4.
  \end{cases}
  \end{align*}
\end{proof}

The expression for  $C_5$ is too lengthy to  be reproduced here,
but the  derivation goes along the  same lines as that  of $C_4$
with
\begin{align*}
  x_{1,2}^*=  \frac{  \gamma_{2}\gamma_{3}  -  \gamma_{3}^{2}  -
    \gamma_{1}\gamma_{4}      +      \gamma_{3}\gamma_{4}      +
    \gamma_{1}\gamma_{5}     -      \gamma_{2}\gamma_{5}     \pm
    \sqrt{\Delta_5}}   {2(\gamma_{2}^{2}   +  \gamma_{3}^{2}   +
    \gamma_{1}   (\gamma_{4}   -    \gamma_{3})   -   \gamma_{2}
    (\gamma_{3} + \gamma_{4}))},
\end{align*}
and
\begin{align*}
\Delta_5   =  &   \left(\gamma_{3}   (\gamma_3   -  \gamma_2   -
  \gamma_{4})+  \gamma_{1} (\gamma_{4}-\gamma_{5})  + \gamma_{2}
  \gamma_{5}\right)^{2}    \\    &   +    4\left(\gamma_{2}^{2}+
  \gamma_{3}^{2}   +    \gamma_{1}   (\gamma_{4}-\gamma_{3})   -
  \gamma_{2}(\gamma_{3}+\gamma_{4})\right)  \\  & \times  \left(
  \gamma_{4}   (\gamma_{2}  -   \gamma_{4})  -\gamma_{3}^{2}   -
  \gamma_{2}\gamma_{5}     +     \gamma_{3}    (\gamma_{4}     +
  \gamma_{5})\right).
\end{align*}

Notice that  the optimization over $\{  x_k \}$ in the  proof of
Theorem~\ref{thm:bounds}   is  over   $\ceil{\frac{t-1}2}$  real
parameters,  and becomes  cumbersome  for $t$  larger than  $5$,
namely for $3$ parameters or more.

Notice    also    that,    as   expected,    the    bounds    in
Theorem~\ref{thm:bounds}    reduce    to    those    given    in
Ref.~\cite{Dal15}   for  projective   $t$  designs,   i.e.  when
$\mu_k(\chi) = 1$ for all $k \in [1, t]$.

Our  second  main  result  is the  derivation  of  the  capacity
$C(\mathcal{D}_\lambda(\pi))$  for  the depolarized  version  of
several         $t$-design        POVMs:         $2$-dimensional
SIC~\cite{Dav78,DDS11,OCMB11,SS14,DBO14,Szy14,Dal14,Dal15}
(tetrahedron),  complete MUB~\cite{SS14,Dal15}  (octahedron), or
icosahedron~\cite{SS14,Dal15};                   $3$-dimensional
SIC~\cite{DBO14,Szy14}   or   complete   MUB~\cite{Dal14,Dal15};
$8$-dimensional Hoggar SIC~\cite{SS15}; $d$-dimensional anti-SIC
or uniform rank-one~\cite{JRW94} POVM.

\begin{thm}
  \label{thm:examples}
  The  classical capacity  $C(\mathcal{D}_\lambda(\pi))$ of  the
  depolarized POVM  $\pi$, where $\pi$ is  a $2$-dimensional SIC
  (tetrahedron), complete  MUB (octahedron), or  icosahedron, or
  $3$-dimensional SIC or complete MUB, or $8$-dimensional Hoggar
  SIC, or $d$-dimensional anti-SIC, or the uniform rank-one POVM
  is given by
  \begin{align*}
    \begin{cases}
      C_\textrm{tetra}  =   &  \ln2  -  \frac{     \eta  \left(
          \frac{1-\lambda}2 \right) + 3 \eta \left(
          \frac{3+\lambda}6 \right)}2,\\
      C_\textrm{octa}  =  &  \ln  2   -  \frac{   \eta  \left(
          \frac{1-\lambda}2 \right) + 4 \eta \left( \frac12
        \right) +  \eta \left( \frac{1+\lambda}2 \right)}3,\\
      C_\textrm{icosa}   =   &    \ln2   -   \frac{\eta   \left(
          \frac{1-\lambda}2    \right)   +    5   \eta    \left(
          \frac{5-\sqrt{5}\lambda}{10} \right)  + 5  \eta \left(
          \frac{5+\sqrt{5}\lambda}{10} \right) + \eta
        \left( \frac{1+\lambda}2 \right)}6,\\
      C_{3\textrm{SIC}}  =  &  C_{3MUB}  = \ln3  -  \eta  \left(
        \frac{1-\lambda}3 \right) -
      2 \eta \left( \frac{2+\lambda}6 \right),\\
      C_{\textrm{Hoggar}}  =  &  \ln   8  -  \frac{7\eta  \left(
          \frac{1-\lambda}8 \right) + 9 \eta \left(
          \frac{9+7\lambda}{72} \right)}2,\\
      C_\textrm{anti-SIC}  =  &  \ln  d -  \frac1d  \eta  \left(
        \frac{1-\lambda}d \right) - \frac{d^2-1}d \eta \left(
        \frac{d^2-1+\lambda}{d(d^2-1)}     \right),\\
      C_\textrm{uniform}   =  &   \ln(1-\lambda)  +   \lambda  +
      \frac{d\lambda^2      \hg       \left(1,      1;      d+2;
          -\frac{d\lambda}{1-\lambda}\right)}{(d+1)(1-\lambda)},
    \end{cases}
  \end{align*}
  where $\hg$ denotes the hypergeometric function~\cite{OLBC10}.
  Moreover, the  optimal ensembles for the  depolarized versions
  of  $\pi$ are  exactly  the same  as for  $\pi$  and they  all
  average to the maximally mixed state.
\end{thm}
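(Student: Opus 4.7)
The plan is to apply Lemma~\ref{lmm:capacity} together with the Hermite interpolation technique of Eqs.~\eqref{eq:hermite1}--\eqref{eq:hermite2} case by case. The key observation is that $\mathcal{D}_\lambda(\pi)$ is a $t$ design whenever $\pi$ is (as noted after Eq.~\eqref{eq:affine}), that depolarization acts affinely on overlaps as $\bra{\phi}\hat\pi_y\ket{\phi} \mapsto \lambda\bra{\phi}\hat\pi_y\ket{\phi}+(1-\lambda)/d$, and that the symmetry group of $\pi$ is preserved by $\mathcal{D}_\lambda$; together these facts reduce the analysis for arbitrary $\lambda$ to the $\lambda=1$ case.

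For each of the listed POVMs I would first identify a candidate minimizer $\phi_0$ of the right-hand side of Eq.~\eqref{eq:capacity}, suggested by the symmetry of the configuration: a state orthogonal to a SIC vector for the tetrahedron, the $3$-dimensional SIC and the Hoggar POVMs; an element of one basis for the complete MUB POVMs; a vertex of the icosahedron; a SIC vector itself for the anti-SIC; and an arbitrary pure state for the uniform rank-one POVM (by full $U(d)$ invariance). From the defining inner-product relations of each POVM one directly reads off the distinct values of $\bra{\phi_0}\hat\pi_y\ket{\phi_0}$ and their weights, and after depolarization these become $\lambda\bra{\phi_0}\hat\pi_y\ket{\phi_0}+(1-\lambda)/d$, matching the arguments of $\eta$ in the statement. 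Substituting into Lemma~\ref{lmm:capacity} yields the claimed closed forms as upper bounds, and, because $\phi_0$ itself does not depend on $\lambda$, also proves the claim that optimal ensembles for $\mathcal{D}_\lambda(\pi)$ coincide with those for $\pi$.

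To promote these bounds to equalities I would invoke the Hermite argument of Eq.~\eqref{eq:hermite1}: take the interpolation nodes to coincide with the overlaps at $\phi_0$, with the derivative conditions prescribed by Lemma~\ref{lmm:anna}, so that $r$ has degree at most $t$, satisfies $r\le\eta$ on $[0,1]$, and agrees with $\eta$ exactly at those nodes. Because $\mathcal{D}_\lambda(\pi)$ is a $t$ design, the resulting $Q(\phi)$ reduces to a linear combination of the fixed indices of coincidence $\gamma_1,\ldots,\gamma_t$, hence is constant in $\phi$; in particular $\phi_0$ is a global minimizer and Eq.~\eqref{eq:hermite2} becomes an equality. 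Tightness of Lemma~\ref{lmm:capacity} itself then follows because in each case the group orbit of $\phi_0$ yields an ensemble averaging to $\openone/d$, verifying the convex-hull hypothesis and establishing the remaining claim that the optimal ensembles average to the maximally mixed state.

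The main obstacle is case specific. For the finite POVMs one must check that the number of distinct overlaps at $\phi_0$ is small enough that the Hermite interpolant has degree at most the design order of $\pi$: this is essentially immediate for the SICs (two distinct overlaps against a $2$ design), straightforward for MUBs, anti-SICs, and the Hoggar configuration, but requires more care for the icosahedron, whose richer overlap structure is accommodated only because it is a $5$ design. The uniform rank-one POVM falls outside this discrete scheme: there $\phi_0$ can be taken arbitrarily by $U(d)$ symmetry, and the integral $d\int\dif U\,\eta\bigl(\lambda|\bra{\phi_0}U\ket{0}|^2+(1-\lambda)/d\bigr)$ has to be evaluated directly using the Beta distribution of $|\bra{\phi_0}U\ket{0}|^2$ under the Haar measure, producing the Gauss hypergeometric $\hg(1,1;d+2;\cdot)$ that appears in $C_\textrm{uniform}$.
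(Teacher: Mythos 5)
Your overall strategy is the same as the paper's: apply Lemma~\ref{lmm:capacity}, use the affine action of $\mathcal{D}_\lambda$ on overlaps, build the Hermite interpolant of Lemma~\ref{lmm:anna} at the overlaps of a candidate minimizer so that $Q(\phi)$ becomes a fixed combination of $\gamma_1,\dots,\gamma_t$ and is therefore constant, and treat the uniform POVM separately by an explicit Haar integral (your Beta-distribution route and the paper's Taylor-plus-moment-identity route both land on the same $\hg$). The gap is in your identification of the candidate minimizers, which you treat as if it could be ``directly read off from the defining inner-product relations.'' It cannot, and in one case your candidate is wrong. For the $3$-dimensional complete MUB the optimal ensemble is \emph{not} an element of one of the bases but the Hesse SIC: a basis element has three distinct overlap values $\{1,0,1/3\}$ with multiplicities $(1,2,9)$, which neither reproduces the stated formula $C_{3\mathrm{MUB}}=\ln3-\eta\left(\frac{1-\lambda}{3}\right)-2\,\eta\left(\frac{2+\lambda}{6}\right)$ (that formula corresponds to overlaps $0$ and $1/2$ with multiplicities $4$ and $8$, i.e.\ a state orthogonal to four of the twelve MUB vectors) nor admits a below-interpolating Hermite polynomial of degree at most $t=2$, so your constancy argument breaks down there.

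A related weakness affects the qutrit SIC and the Hoggar SIC. Being orthogonal to \emph{one} SIC vector does not pin down the overlap structure in $d>2$: the formula $C_{3\mathrm{SIC}}$ requires states orthogonal to exactly three of the nine SIC vectors simultaneously, and $C_{\mathrm{Hoggar}}$ requires states orthogonal to $28$ of the $64$ Hoggar vectors (the ``dual'' Hoggar lines). The existence of such states is a nontrivial input — the paper cites Refs.~\cite{DBBA,HS16} for the qutrit case and Ref.~\cite{SS15} for Hoggar — and without it your two-node interpolation scheme has nothing to attach to. Once the correct minimizers are supplied, the rest of your argument (constancy of $Q$, tightness via the orbit averaging to $\openone/d$, $\lambda$-independence of the optimal ensemble) goes through exactly as in the paper.
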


\begin{proof}
  It is well known that all the rank-one POVMs $\pi$ included in
  the theorem are projective $t$ designs,  with $t = 3$ if $\pi$
  is a  $2$-dimensional complete  MUB, $t  = 5$  if $\pi$  is an
  icosahedron,   $t  =   \infty$   if  $\pi$   is  the   uniform
  distribution, and  $t =  2$ otherwise. It  is easy  to observe
  that the $d$-dimensional anti-SIC  is a mixed 2-design.  Thus,
  due  to  Eq.~\eqref{eq:affine}, $\mathcal{D}_\lambda(\pi)$  is
  also a $t$ design.

  Let us  first discuss the  case where  $t$ is finite.   Let us
  recall  what the  optimal ensembles  are for  POVMs $\pi$,  as
  given  in Refs.~\cite{SS14,DBO14,Szy14,Dal14,Dal15,SS15}.   We
  get the dual tetrahedron for tetrahedral SIC and 'dual' Hoggar
  lines  for   the  Hoggar   SIC,  (the  same)   octahedron  and
  icosahedron    for   octahedral    and   icosahedral    POVMs,
  respectively, the Hesse SIC for 3-dimensional complete MUB and
  finally an  orthonormal basis consisting of  states orthogonal
  to exactly $3$ different $\pi_y$ for generic 3-dimensional SIC
  or the complete MUB for the Hesse SIC (note that the existence
  of  such  a basis  was  proved  in  Ref. \cite{DBBA}  for  the
  group-covariant  (Weyl-Heisenberg)  case while  just  recently
  Ref. \cite{HS16} gave the proof that there are no other qutrit
  SICs). We shall  also show that for the  anti-SIC, the optimal
  ensemble is  given by the  SIC itself.  Notice that  all these
  ensembles average to the maximally mixed state.
  
  For any state $\phi$ of such an ensemble one has
  \begin{align*}
    \bra{\phi}  \hat\pi_y \ket{\phi}  =
    \begin{cases}
      x_0  \times  m_0  ,\\
      \dots\\
      x_l \times m_l
    \end{cases},
  \end{align*}
  where  $x_1 \le  \dots  \le x_l$,  $x_1  = 0$  and  $x_l =  1$
  whenever  $t$ is  odd.   Here the  notation  $x_k \times  m_k$
  denotes the value $x_k$ with multiplicity $m_k$.  Moreover, in
  all  these   cases  there  are   exactly  $\ceil{\frac{t}2}+1$
  different         values         of        $x_k$,         thus
  $l:=\ceil{\frac{t}2}$. Therefore one has
  \begin{align*}
    \bra{\phi}  \mathcal{D}_\lambda(\hat\pi) \ket{\phi} =
    \begin{cases}
      \lambda
      x_0 + (1-\lambda) \frac1d \times  m_0,\\
      \dots\\
      \lambda x_{\ceil{\frac{t}2}} +
      (1-\lambda) \frac1d \times m_{\ceil{\frac{t}2}}.
    \end{cases}
  \end{align*}
  Notice that
  \begin{align*}
    \begin{cases}
      \lambda  x_1  +  (1-\lambda)  \frac1d  :=  \min_{\phi,  y}
      \Tr[\phi
      \mathcal{D}_\lambda(\hat\pi_y)],\\
      \lambda  x_{\ceil{\frac{t}2}+1} +  (1-\lambda) \frac1d  :=
      \max_{\phi, y} \Tr[\phi \mathcal{D}_\lambda(\hat\pi_y)] \;
      \textrm{ if $t$ is odd} ,
  \end{cases}
  \end{align*}
  Therefore,  according  to  Lemma~\ref{lmm:anna},  the  Hermite
  interpolating polynomial  $r$ such  that $r(x_k)  = \eta(x_k)$
  for  any   $k  \in  [0,  \ceil{\frac{t}2}]$   and  $r'(x_k)  =
  \eta'(x_k)$ for  any $k  \in [1, \ceil{\frac{t-1}2}]$  is such
  that  $\deg r\le  t$ and  $r(x) \le  \eta(x)$ for  any $x  \in
  [\frac{1-\lambda}d,     \lambda    +     (1-\lambda)\frac1d]$.
  Therefore, the average  of this polynomial has  to be constant
  and    according   to    our   remarks    at   the    end   of
  Sec.~\ref{sec:capacity},         the        maximum         in
  Eq.~\eqref{eq:capacity} is attained by $\phi$.

  Let  us now  discuss the  case  $t =  \infty$.  The  statement
  follows   by    expanding   $\eta(x)   =   -x    \ln   x$   in
  Lemma~\ref{lmm:capacity} in a Taylor  series around $\lambda +
  \frac{1-\lambda}d$, applying  the binomial  theorem, replacing
  $x        =        \lambda|\bra{\phi}U_g\ket{\phi}|^2        +
  \frac{1-\lambda}{d}$,  and using  the  identity  $\int \dif  g
  |\bra{\phi}  U_g \ket{\phi}|^{2k}  =  {d  + k  -  1 \choose  k
  }^{-1}$. Notice  that the series  for $\eta(x)$ has  radius of
  convergence $\lambda + \frac{1-\lambda}d$, which is also equal
  to $b  := \max_\psi  \bra{\psi} \mathcal{D}_\lambda(\hat\pi_y)
  \ket{\psi}$, therefore  $\eta(x)$ is equivalent to  its series
  in $[a, b]$.
\end{proof}

Notice     that,    as     expected,    the     capacities    in
Theorem~\ref{thm:examples}    reduce   to    those   given    in
Refs.~\cite{DDS11,SS14,DBO14,Szy14,Dal14,Dal15,SS15,JRW94}   for
projective $t$ designs when one takes the limit $\lambda \to 1$.

The results  of Theorems~\ref{thm:bounds} and~\ref{thm:examples}
are  represented   in  Figs.~\ref{fig:qubit},  \ref{fig:qutrit},
and~\ref{fig:hoggar},  in the  $2$-,  $3$-, and  $8$-dimensional
cases, respectively.

\begin{figure}[h]
  \includegraphics[width=\columnwidth]{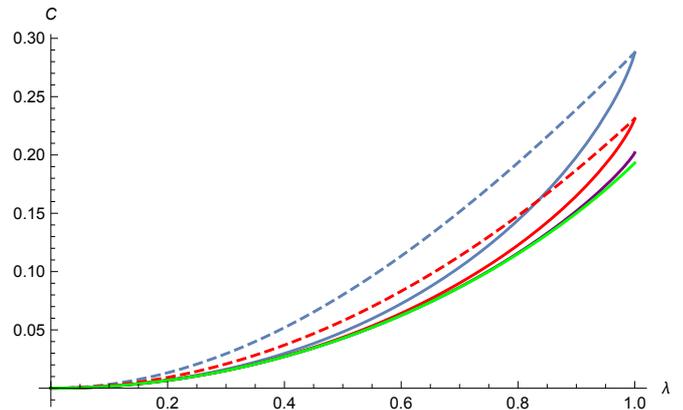}
  \caption{(Color  online) Classical  capacities of  qubit mixed
    $t$-design POVMs as a function of the depolarizing parameter
    $\lambda$.  Dashed lines, from  top to bottom, represent the
    upper bounds on  the capacity of the  depolarized version of
    any projective  $2$ design (blue)  and $3$ design  (red), as
    given by Theorem~\ref{thm:bounds}.  Solid lines, from top to
    bottom, represent the capacities  of the depolarized version
    of any SIC (blue), complete MUB (red), icosahedron (purple),
    and    uniform   distribution    (green),   as    given   by
    Theorem~\ref{thm:examples}.}
  \label{fig:qubit}
\end{figure}

\begin{figure}[h]
  \includegraphics[width=\columnwidth]{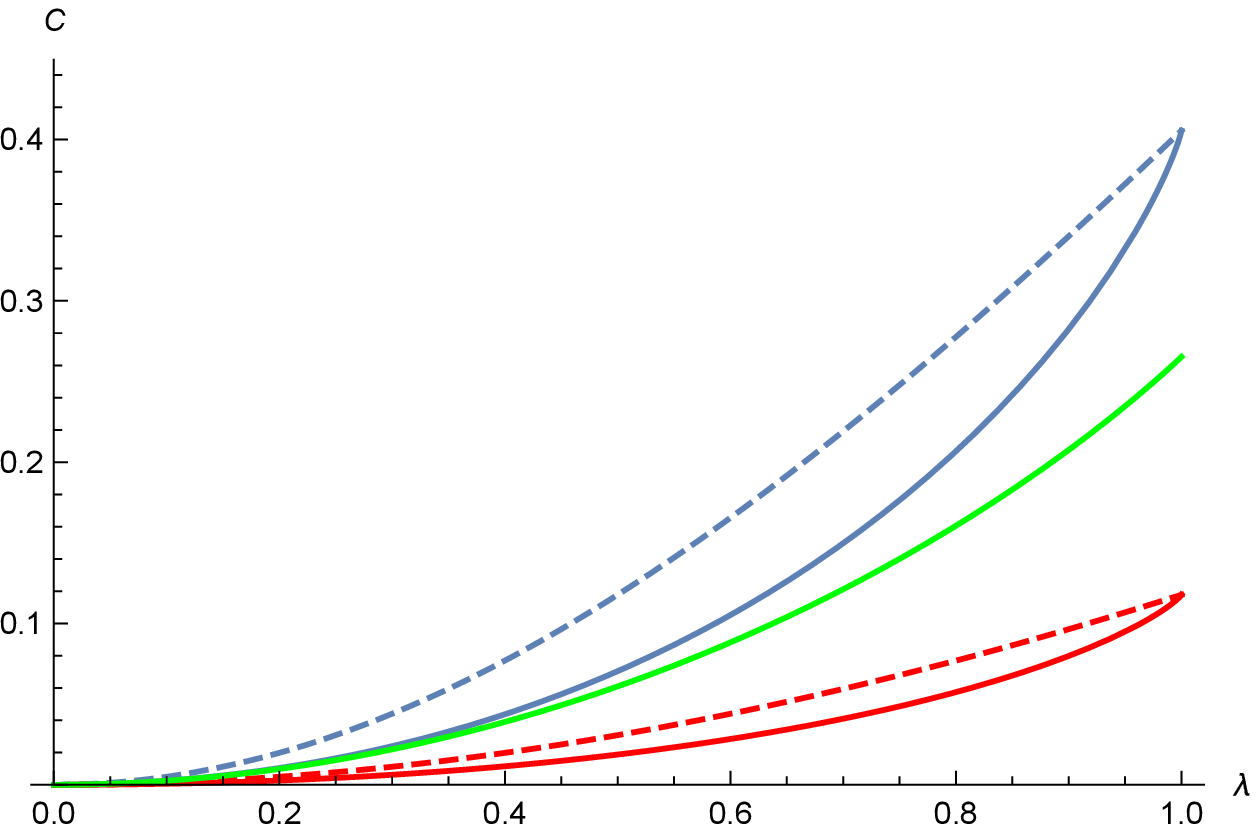}
  \caption{(Color online)  Classical capacities of  qutrit mixed
    $t$-design POVMs as a function of the depolarizing parameter
    $\lambda$.  Dashed lines, from  top to bottom, represent the
    upper bounds on  the capacity of the  depolarized version of
    any projective $2$ design (blue) and any anti projective $2$
    design (red),  as given by  Theorem~\ref{thm:bounds}.  Solid
    lines, from top  to bottom, represent the  capacities of the
    depolarized  version  of any  SIC  and  MUB (blue),  uniform
    distribution  (green),  and  anti-SIC  (red),  as  given  by
    Theorem~\ref{thm:examples}.}
  \label{fig:qutrit}
\end{figure}

\begin{figure}[h]
  \includegraphics[width=\columnwidth]{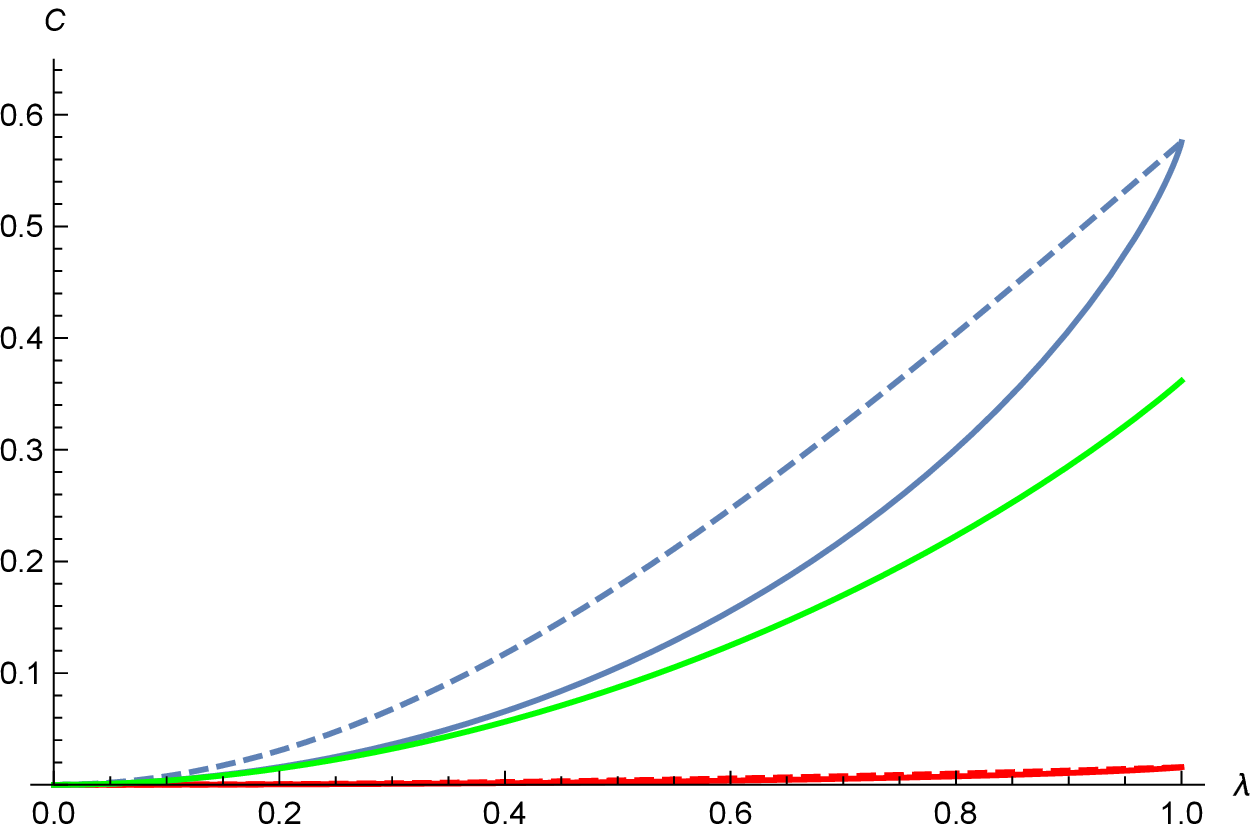}
  \caption{(Color     online)     Classical    capacities     of
    $8$-dimensional mixed $t$-design POVMs  as a function of the
    depolarizing parameter $\lambda$.  Dashed lines, from top to
    bottom, represent  the upper bounds  on the capacity  of the
    depolarized version of any  projective $2$ design (blue) and
    any  anti   projective  $2$   design  (red),  as   given  by
    Theorem~\ref{thm:bounds}.  Solid lines,  from top to bottom,
    represent the  capacities of the depolarized  version of any
    Hoggar  SIC   (blue),  uniform  distribution   (green),  and
    anti-SIC (red), as given by Theorem~\ref{thm:examples}.}
  \label{fig:hoggar}
\end{figure}

Finally,  let  us  notice  that the  capacity  $C(\pi)$  of  any
measurement $\pi$  can be expressed  in terms of  the well-known
accessible information~\cite{LL66,  Lev92, Lev95,  Lev96, Hol73,
  Bel75a,  Bel75b,  Dav78,  JRW94, DW04}  $A(\rho)  :=  \max_\pi
I(\rho, \pi)$ of ensemble $\rho$,  where the maximum is over any
POVM $\pi$.   The relation is  given by $C(\pi)  = \max_{\sigma}
A(\sigma^{1/2}  \pi \sigma^{1/2})$,  where by  $\sigma^{1/2} \pi
\sigma^{1/2}$  we  mean  the  ensemble  $y  \mapsto  \frac{1}{d}
\sigma^{1/2} \pi_y  \sigma^{1/2}$ and the maximum  is taken over
all states $\sigma$.  Therefore, in  general one has $W(\pi) \ge
A(\frac1d \pi)$, and  thus the upper bound  on $C(\pi)$ provided
in  Thm.~\ref{thm:bounds}  also   upper  bounds  the  accessible
information $A(\frac1d  \pi)$.  Moreover, whenever  the ensemble
attaining the capacity for POVM  $\pi$ averages to the maximally
mixed state,  one has that~\cite{DDS11} the  informational power
$W(\pi) =  A(\frac1d \pi)$.  Therefore, the  expressions for the
capacities   in  Thm.~\ref{thm:examples}   also  represent   the
accessible  information $A(\frac1d  \pi)$  of the  corresponding
ensembles.

\section{Conclusions and Outlook}
\label{sec:conclusion}

In this work we introduced mixed quantum $t$ designs as the most
general arbitrary-rank extension of  projective $t$ designs that
preserves  indistinguishability  from the  uniform  distribution
given $t$ copies.   We addressed the problem  of quantifying the
communication  capacity  of  mixed $t$  design  measurements  by
deriving upper bounds on such a quantity for any $t \in [1, 5]$.
We refined our  results by providing a  closed-form solution for
the  communication   capacity  of  several  mixed   $t$  designs
measurements, including  the depolarized  version of:  any qubit
and qutrit SIC and MUBs,  any qubit icosahedral measurement, any
Hoggar SIC, any anti-SIC, and the uniform distribution.

One   might   conjecture   that  the   quantification   of   the
communication   capacity   of   mixed  $(t   =   \infty)$-design
measurements  would shine  new  light on  the  problem of  lower
bounding the  communication capacity of mixed  quantum POVMs, in
the  same   way  as   the  quantification~\cite{JRW94}   of  the
accessible  information  of  projective $(t  =  \infty)$-designs
allowed for the  computation of a lower bound  on the accessible
information of ensembles of pure states.

\section*{Acknowledgments}
The authors are  grateful to Jung Jun Park for  pointing out the
relation between the  index of coincidence of  mixed $t$ designs
and      Bell       polynomials,      as       expressed      in
Eq.~\eqref{eq:coincidence}.   M.D.   is  grateful  to  Francesco
Buscemi, Massimiliano F.  Sacchi, and Vlatko Vedral for valuable
discussions.  This  work is  supported by Singapore  Ministry of
Education    Academic    Research    Fund    Tier    3    (Grant
No. MOE2012-T3-1-009).

\appendix

\section{Hermite interpolation}
\label{sec:hermite}

For the completeness of our reasoning we add a brief description
of the optimization method based on Hermite interpolation, first
introduced  in   this  context   and  discussed  in   detail  in
Ref.~\cite{SS14}. Let us first recall the well known formula for
the Hermite interpolation error.
\begin{lmm}
  Let $a  \le x_0  < \dots  < x_{m-1}  \le b$  be reals  and let
  $\{j_{i}\}_{i=0}^{m-1}$ be  positive integers.   Let $\eta(x)$
  be  a  real  function  with   at  least  $t  +  1$  continuous
  derivatives  on   $[a,b]$  and  let  $r(x)$   be  its  Hermite
  interpolating polynomial,  namely the polynomial of  degree at
  most $t :=  \sum_{i} j_{i} - 1$ that agrees  with $\eta(x)$ at
  $x_{i}$ up to derivative of order $j_{i}-1$ for all $i$, i.e.
  \begin{align*}
    r^{(k)}(x_{i}) = \eta^{(k)}(x_{i}), \qquad  \forall 0 \leq k
    \leq j_{i}-1, \ \ 0 \leq i \leq m -1.
  \end{align*}
  Then  for  any $x  \in  [a,b]$  there  exists $x'$  such  that
  $\min(x,x_0) <x' < \max(x,x_{m-1})$ and
  \begin{align*}
    \eta(x)    -    r(x)    =    \frac{\eta^{(t+1)}(x')}{(t+1)!}
    \prod_{i=0}^{m-1} (x-x_{i})^{j_{i}}.
  \end{align*}
\end{lmm}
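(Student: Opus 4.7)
The plan is to prove the statement using the classical auxiliary-function argument paired with iterated Rolle's theorem, which generalizes the standard derivation of the Lagrange interpolation remainder. First I would dispose of the trivial case: if $x$ coincides with some node $x_i$, then both sides of the identity vanish (the product vanishes to order $j_i \ge 1$ at $x_i$, and $\eta(x_i) = r(x_i)$ by the interpolation conditions), so any $x'$ in the required open interval works. So from now on assume $x$ differs from every node.

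Next I would introduce the auxiliary function
\begin{align*}
\phi(y) := \eta(y) - r(y) - K \prod_{i=0}^{m-1} (y-x_{i})^{j_{i}},
\end{align*}
with $K$ chosen so that $\phi(x) = 0$, namely $K := (\eta(x)-r(x))/\prod_i (x-x_i)^{j_i}$. By the Hermite conditions, $\eta - r$ vanishes at each $x_i$ to order at least $j_i$, and the product on the right vanishes there to order exactly $j_i$, so $\phi$ has a zero of order at least $j_i$ at each $x_i$ and an additional simple zero at $x$. Thus $\phi$ has at least $\sum_i j_i + 1 = t+2$ zeros in $I := [\min(x, x_0), \max(x, x_{m-1})]$ counted with multiplicity.

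Then I would iterate Rolle's theorem: a zero of multiplicity $j$ for $\phi$ contributes a zero of multiplicity $j-1$ at the same point for $\phi'$, while between any two distinct zeros Rolle produces an additional zero. A careful bookkeeping shows that $\phi^{(k)}$ retains at least $t+2-k$ zeros in $I$ for $0 \le k \le t+1$. In particular $\phi^{(t+1)}$ has at least one zero in $I$; call it $x'$. Since $\deg r \le t$ we have $r^{(t+1)} \equiv 0$, and the product $\prod_i (y-x_i)^{j_i}$ is a monic polynomial of degree $t+1$, so its $(t+1)$-th derivative equals $(t+1)!$. The identity $\phi^{(t+1)}(x') = 0$ therefore reads $\eta^{(t+1)}(x') = (t+1)!\, K$, and substituting this value of $K$ into $\phi(x) = 0$ yields exactly the stated formula.

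The one delicate point is the multiplicity bookkeeping when applying Rolle's theorem, since an unattentive count (treating only distinct zeros) would lose roots and miss the bound. Once the standard convention is adopted that a zero of order $j$ simultaneously provides $j$ collapsed zeros (and hence $j-1$ zeros of the derivative at that same point, plus one new zero between it and each neighbouring distinct zero), the induction $\phi^{(k)}$ has $\ge t+2-k$ zeros is straightforward, and the rest of the argument is algebraic. This is the only step that requires any care; everything else is routine.
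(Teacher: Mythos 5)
Your proof is correct. The paper does not prove this lemma itself but simply cites Stoer--Bulirsch, and the auxiliary-function-plus-iterated-Rolle argument you give (with the multiplicity bookkeeping yielding $t+2$ zeros of $\phi$ counted with multiplicity, hence a zero of $\phi^{(t+1)}$) is precisely the standard proof found in that reference, so you are filling in exactly the argument the paper delegates.
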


\begin{proof}
  See, e.g. Ref.~\cite{SB02}.
\end{proof} 

In this work we set $\eta(x) := - x \ln(x)$. Next lemma provides
sufficient  conditions for  the  polynomial  $r$ to  interpolate
$\eta$ from below.
 
\begin{lmm}
  \label{lmm:anna}
  Let $m\geq 2$, $x_0=a$ and $j_0=1$. Then whenever
  \begin{enumerate}[a.]
  \item $x_{m-1}=b$, $j_{m-1} = 1$  and $j_i = 2$ for $0<i<m-1$,
    or
  \item $x_{m-1}<b$, and $j_i = 2$ for $0<i$,
  \end{enumerate}
  one has $\eta(x) - r(x) \geq 0$ for any $x \in [a, b]$.
\end{lmm}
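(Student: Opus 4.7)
The plan is to invoke the Hermite interpolation error formula from the preceding lemma and analyse the signs of its two factors, $\eta^{(t+1)}(x')/(t+1)!$ and $\prod_{i=0}^{m-1}(x-x_i)^{j_i}$, separately on $[a,b]$. Throughout I would assume $a>0$ so that $\eta$ is smooth on $[a,b]$; the boundary case $a=0$ will be dealt with at the end by a continuity argument.

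First, I would compute by a short induction that for $k\ge 2$ one has $\eta^{(k)}(x) = (-1)^{k-1}(k-2)!\,x^{-(k-1)}$, so the sign of $\eta^{(k)}$ on $(0,\infty)$ is $(-1)^{k-1}$ and depends only on the parity of $k$. Next I would use the identity $t+1 = \sum_i j_i$ to determine the parity of $t+1$ in each case: in case (a), $\sum_i j_i = 1 + 2(m-2) + 1 = 2m-2$ is even, hence $\eta^{(t+1)}<0$ on $(0,\infty)$; in case (b), $\sum_i j_i = 1 + 2(m-1) = 2m-1$ is odd, hence $\eta^{(t+1)}>0$ on $(0,\infty)$.

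Then I would analyse the sign of the product $\prod_{i=0}^{m-1}(x-x_i)^{j_i}$ on $[a,b]$. All interior factors carry the even exponent $j_i=2$ and are therefore non-negative, so the sign is determined entirely by the factors with $j_i=1$. In case (a) these are $(x-a)$ and $(x-b)$, whose product is non-positive on $[a,b]$; in case (b) only the factor $(x-a)\ge 0$ remains. Combining with the sign of $\eta^{(t+1)}$ computed above, case (a) yields (negative)$\times$(non-positive)$\ge 0$ and case (b) yields (positive)$\times$(non-negative)$\ge 0$, giving $\eta(x)-r(x)\ge 0$ in both cases.

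The main subtlety I anticipate is the boundary case $a=0$, at which the higher derivatives of $\eta$ blow up and the Hermite error formula does not apply directly. I would handle this by observing that the preceding sign argument is valid on the open subinterval $(0,b]$, while at $x=0$ one has $\eta(0)=0$ by the conventional extension and also $r(0)=0$ (since $x_0=a=0$ and $j_0=1$ force $r(x_0)=\eta(x_0)=0$), so the inequality extends to the closed interval $[0,b]$ by continuity of both sides.
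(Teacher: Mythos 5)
Your proof is correct and follows essentially the same route as the paper's: invoke the Hermite error formula and check that in each case the parity of $t+1$ (hence the sign of $\eta^{(t+1)}$) is matched against the sign of the node product $\prod_i (x-x_i)^{j_i}$ so that their product is non-negative. The explicit formula $\eta^{(k)}(x)=(-1)^{k-1}(k-2)!\,x^{-(k-1)}$ and the careful treatment of the endpoint $a=0$ are details the paper leaves implicit, but they do not alter the argument.
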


\begin{proof}
  The proof was first derived in Ref.~\cite{SS14}. We report it
  here for clarity.
  \begin{enumerate}[a.]
  \item In this case one has
    \begin{align*} 
      \prod_{i=0}^{m-1}   (x-x_{i})^{j_{i}}    =   (x-a)   (x-b)
      \prod_{i=0}^{m-1} (x-x_{i})^{2} \leq 0.
    \end{align*}
    Likewise, $t$ is odd and so $\frac{\eta^{(t+1)}(x')}{(t+1)!}
    \le  0$,  since  all   even  derivatives  of  $\eta(x)$  are
    negative.
  \item In this case
    one has
    \begin{align*}
      \prod_{i=0}^{m-1}      (x-x_{i})^{j_{i}}      =      (x-a)
      \prod_{i=0}^{m-1} (x-x_{i})^{2} \geq 0.
    \end{align*}
    Likewise, $t$  is even  and $\frac{\eta^{(t+1)}(x')}{(t+1)!}
    \ge  0$, since  all  odd (greater  than  1) derivatives  of
    $\eta(x)$ are positive.\qedhere
  \end{enumerate}
\end{proof}

\end{document}